\documentclass[conference]{IEEEtran}
\usepackage{dsfont,graphicx,color,amsfonts,epsf,epsfig,verbatim,amssymb,amsmath,array,cite,amsthm,subfigure,multicol,multirow}
\ifCLASSINFOpdf
\else
\fi

\hyphenation{op-tical net-works semi-conduc-tor}

\begin{document}
\title{On the Capacity of Abelian Group Codes Over Discrete Memoryless Channels}

\author{\IEEEauthorblockN{Aria G. Sahebi and S. Sandeep Pradhan}
\IEEEauthorblockA{Department of Electrical Engineering and Computer Science,\\
University of Michigan, Ann Arbor, MI 48109, USA.\\
Email: \tt\small ariaghs@umich.edu, pradhanv@umich.edu}}

\newtheorem{theorem}{Theorem}[section]
\newtheorem{deff}{Definition}[section]
\newtheorem{example}{Example}[section]
\newtheorem{lemma}[theorem]{Lemma}
\newtheorem{prop}[theorem]{Proposition}
\newtheorem{cor}[theorem]{Corollary}

\maketitle

\begin{abstract}
For most discrete memoryless channels, there does not exist a linear code for the channel which uses all of the channel's input symbols. Therefore, linearity of the code for such channels is a very restrictive condition and there should be a loosening of the algebraic structure of the code to a degree that the code can admit any channel input alphabet. For any channel input alphabet size, there always exists an \emph{Abelian group} structure defined on the alphabet. We investigate the capacity of Abelian group codes over discrete memoryless channels and provide lower and upper bounds on the capacity.
\end{abstract}



%
\IEEEpeerreviewmaketitle

\section{Introduction}

Approaching information theoretic performance limits of communication
systems using structured codes has been an area of great interest in recent
years \cite{ahlswede_alg_codes,sandeep_discus,forney_dynamics,fagnani_abelian,dinesh_dsc,dobrushin_group}.
The earlier attempts to design fast encoding and decoding algorithms
resulted in injection of algebraic structures to the coding scheme so that the
channel input alphabets are replaced with algebraic fields and encoders are replaced with
matrices. It is well-known that binary linear codes achieve the capacity of
binary symmetric channels \cite{elias}. More generally, it has also been
shown that $q$-ary linear codes can achieve the capacity of symmetric
channels \cite{dobrushin_group} and linear codes can be used to compress
a source losslessly down to its entropy \cite{korner_marton}. Optimality of
linear codes for certain communication problems motivates the study of
structured codes in general.

In 1979, Korner-Marton showed that for
multiterminal communication problems, the asymptotic  average performance of
linear code ensembles can be superior to that of the standard
code ensembles traditionally used in information theory. In the recent
past, such gains have been shown for a wide class of problems
\cite{phiosof_zamir, dinesh_dsc, nazer_gastpar}. Hence
information-theoretic characterizations of performance of such structured code
ensembles for various communication problems have become important.

The algebraic structure of the code,  however, imposes
certain restrictions on the
encoder.   Linear codes are highly structured and for certain communication
problems such codes cannot be optimal. Moreover, they can only be
constructed on alphabets of certain size (prime power).
Group codes are a class of algebraic-structured codes that are more general
because we can construct such codes over any alphabet, and they
have been shown to outperform unstructured codes in
certain communication settings \cite{dinesh_dsc}. Group codes were first
studied by Slepian \cite{slepian_group} for the Gaussian channel. In
\cite{ahlswede_group}, the capacity of group codes for certain classes of
channels has been computed. Further results on the capacity of group codes
were established in \cite{ahlswede_alg_codes,ahlswede_alg_codes2}. The
capacity of group codes over a class of channels exhibiting symmetries with
respect to the action of a finite Abelian group has been investigated in
\cite{fagnani_abelian}.\\

In this work, we focus on the point-to-point channel coding problem over
general discrete memoryless channels.  The channel input
alphabet is equipped with the structure of an Abelian group. We
characterize the performance of asymptotically good Abelian group
codes over general discrete memoryless channels.
 In particular, we derive
lower and upper bounds on the capacity of Abelian group codes for
communication over such channels.  We use a combination of algebraic and
information-theoretic tools for this task.

The paper is organized as follows. In section \ref{notation}, we introduce
our notation and develop the required background. Section \ref{bounds}
presents the lower and upper bound on the capacity of Abelian group codes.
In section \ref{special_cases} we present two special cases, namely, linear
codes over arbitrary channels and arbitrary Abelian group codes over
symmetric channels where the two bounds match.

\section{Definitions and notation} \label{notation}
\subsubsection{Group Codes}
Given a group $G$, a group code $\mathds{C}$ over $G$ with block length $n$ is any subgroup of $G^n$ \cite{forney_dynamics,algebra_bloch}. A shifted group code over $G$, $\mathds{C}+v$ is a group code $\mathds{C}$ shifted by a fixed vector $v\in G^n$.

\subsubsection{Source and Channel Models}
We consider discrete memoryless and stationary channels used without feedback. We associate two finite sets $\mathcal{X}$ and $\mathcal{Y}$ with the channel. These channels can be characterized by a conditional probability law $W(y|x)$. The set $\mathcal{X}$ admits the structure of a finite abelian group $G$ of the same size. The channel is specified by $(G,\mathcal{Y},W)$. Assuming a perfect source coding block applied prior to the channel coding, the source of information generates messages over the set $\{1,2,\ldots,M\}$ uniformly.

\subsubsection{Achievablility and Capacity}
A transmission system with parameters $(n,M,\tau)$ for reliable communication over a given channel $(G,\mathcal{Y},W)$ consists of an encoding mapping and a decoding mapping $e:\{1,2,\ldots,M\}\rightarrow G^n, \mbox{   }f:G^n\rightarrow\{1,2,\ldots,M\}$
%
such that for all $m=1,2,\ldots,M$,
\begin{align*}
\frac{1}{M}\sum_{m=1}^{M}W^n\left(f(Y^n)\ne
m|X^n=e(m)\right)\le \tau
\end{align*}
Given a channel $(G,\mathcal{Y},W)$, the rate $R$ is said to be achievable if for all $\epsilon>0$ and for all sufficiently large $n$, there exists a transmission system for reliable communication with parameters $(n,M,\tau)$ such that $\frac{1}{n}\log M \ge R-\epsilon,\mbox{   }\tau\le \epsilon$.\\
%
If there is no constraint on the encoder, the maximum achievable rate is called the (Shannon) capacity of the channel and is denoted by $C_{|G|}$ which is known to be equal to $\max_{p_X}I(X;Y)$. $|G|$ denotes the cardinality (size) of the set $G$. We use this notation since only the size and not the structure of the channel input alphabet determines the quantity $C_{|G|}$. In this paper, the encoder is constrained to be \emph{affine} and therefore the code is a shifted group code. We denote the maximum achievable rate of such codes by $C_G$. If the distribution of $X$ is confined to be uniform over $G$, we define $C_{|G|}^U=I(X;Y)$. The capacity of shifted group codes over $H$ which is itself a subgroup of a larger group $G$ is denoted by $C_{H,G}$.
\subsubsection{Typicality}
Consider two random variables $X$
and $Y$ with joint probability density function $p_{X,Y}(x,y)$
over $\mathcal{X}\times\mathcal{Y}$. Let $n$ be an integer and
$\epsilon$ a positive real number. The sequence pair $(x^n,y^n)$
belonging to $\mathcal{X}^n\times \mathcal{Y}^n$ is said to be
jointly $\epsilon$-typical with respect to $p_{X,Y}(x,y)$ if
\begin{align*}
\nonumber \forall a\in\mathcal{X},\mbox{   }\forall b\in\mathcal{Y}:
\left|\frac{1}{n}N\left(a,b|x^n,y^n\right)-p_{X,Y}(a,b)\right|\le
\frac{\epsilon}{|\mathcal{X}||\mathcal{Y}|}
\end{align*}
and none of the pairs $(a,b)$ with $p_{X,Y}(a,b)=0$ occurs in
$(x^n,y^n)$. Here, $N(a,b|x^n,y^n)$ counts the number of
occurrences of the pair $(a,b)$ in the sequence pair $(x^n,y^n)$.
We denote the set of all jointly $\epsilon$-typical sequences
pairs in $\mathcal{X}^n\times \mathcal{Y}^n$ by
$A_\epsilon^n(X,Y)$.\\
Given a sequence $x^n\in \mathcal{X}^n$, the set of conditionally
$\epsilon$-typical sequences $A_\epsilon^n(Y|x^n)$ is defined as
\begin{eqnarray}
A_\epsilon^n(Y|x^n)=\left\{y^n\in \mathcal{Y}^n\left| (x^n,y^n)\in
A_\epsilon^n(X,Y)\right.\right\}
\end{eqnarray}
In our notation, $O(\epsilon)$ is any function of $\epsilon$ such that $\lim_{\epsilon\rightarrow 0}O(\epsilon)=0$.

\section{Bounds on the Capacity of Abelian Group Codes}\label{bounds}
It is a standard fact (see \cite{group_hall} and \cite{algebra_bloch} for example) that any Abelian group $G$ can be decomposed into $\mathds{Z}_{p^r}$ groups in the form $G\cong \bigoplus_{i=1}^{I} \mathds{Z}_{{p_i}^{r_i}}$
%
for some integers $r_i$ and primes $p_i$ for $i=1,2,\cdots,I$ with the possibility of repetitions. Define $R_i=\mathds{Z}_{{p_i}^{r_i}}$ to get $G\cong \bigoplus_{i=1}^{I}R_i$. This means that any element $g$ in the group $G$ can be represented by an $I$-tuple $(g_1,g_2,\cdots,g_I)$ where $g_i\in R_i=\{0,1,\cdots,p_i^{r_i}-1\}$ and this representation preserves the group structure of $G$. Any subgroup $H$ of $G$ can be represented by $H\cong \bigoplus_{i=1}^I p_i^{\theta_i} R_i$.
\subsection{Lower bound}
\begin{theorem}
A lower bound on the Capacity of group codes over the group $G\cong \bigoplus_{i=1}^{I}R_i$ for a discrete memoryless channel $(G,\mathcal{Y},W)$ is given by:
\begin{align*}
C_G\ge \max_{\substack{w_1,\cdots,w_I\\w_1+\cdots+w_I=1}}\min_{H\le G} \sum_{S \mbox{ coset of } H} \frac{|H|}{|G|}\frac{C_{|S|}^U}{w_H}
\end{align*}
where $w_H=\sum_{i=1}^I \frac{r_i-\theta_i}{r_i}w_i$ for $H\cong \bigoplus_{i=1}^I p_i^{\theta_i} R_i$ and $C_{|S|}^U$ is the mutual information between the channel input and output when the input distribution is uniform over the subset $S$ of $G$.
\end{theorem}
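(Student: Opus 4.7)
The plan is to establish the lower bound by a random coding argument over an ensemble of shifted Abelian group codes. Given weights $(w_1,\ldots,w_I)$ summing to one and a target rate $R$, I would choose integers $k_1,\ldots,k_I$ with $k_i r_i \log p_i \approx n R w_i$ and define a random homomorphism $\phi: \bigoplus_{i=1}^I R_i^{k_i} \to G^n$ by drawing the images of the canonical $R_i$-generators independently and uniformly, respecting the $R_i$-module structure on each factor. Adding an independent uniform shift $V\in G^n$ yields the random shifted group code $\mathds{C} = \phi\bigl(\bigoplus_i R_i^{k_i}\bigr) + V$. The shift ensures that the transmitted codeword is uniformly distributed on $G^n$, so the induced input distribution per coordinate is uniform on $G$.

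Next, I would use joint $\epsilon$-typicality decoding with respect to $p_X W$ with $p_X$ uniform on $G$, and bound the average error probability by the union bound over nonzero message indices $a\in\bigoplus_i R_i^{k_i}$. The key algebraic step is to classify each $a$ by its \emph{type}: writing $\theta_i = \min_j v_{p_i}(a^{(i)}_j)$ for the $p_i$-adic valuations of the $R_i$-coordinates of $a$, the image $\phi(a)$ is uniform on $H^n$ with $H \cong \bigoplus_i p_i^{\theta_i} R_i$, and after the shift $\phi(a)+V$ is uniform on $S^n$ for a coset $S$ of $H$ chosen uniformly from $G/H$. A direct counting argument using $k_i r_i \log p_i = n R w_i$ shows that the number of indices of type $H$ is at most $\exp(nRw_H)$ up to polynomial factors, since $\sum_i (r_i-\theta_i)k_i \log p_i = \sum_i \frac{r_i-\theta_i}{r_i}(k_i r_i \log p_i) = nRw_H$.

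The final step is a standard typicality estimate: conditionally on $\phi(a)+V$ being uniform on $S^n$, the probability that it is jointly typical with $Y^n$ is at most $\exp(-nC_{|S|}^U)$ up to polynomial factors, since $C_{|S|}^U$ is exactly the mutual information when the input is uniform on the coset $S$. Averaging over cosets $S$ (each with conditional probability $|H|/|G|$) and summing over subgroups $H$, the average error probability is bounded by
\begin{align*}
\sum_{H\le G}\exp\Bigl(nRw_H - n\sum_{S}\tfrac{|H|}{|G|}\,C_{|S|}^U\Bigr),
\end{align*}
which vanishes provided $R \le \frac{1}{w_H}\sum_S \tfrac{|H|}{|G|}\,C_{|S|}^U$ for every $H\le G$. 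Maximizing over admissible weight vectors yields the claimed bound.

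The hard part is the algebraic/combinatorial analysis of the image distribution of $\phi$ for mixed prime-power groups. For a pure $R_i=\mathds{Z}_{p^r}$ factor this is a classical $p$-adic valuation computation, but with several $R_i$ components one must simultaneously track the joint valuation profile $(\theta_1,\ldots,\theta_I)$, identify the subgroup $H$ it determines, and allocate the rate budget $w_i$ correctly among the factors so that the exponent $w_H$ emerges cleanly. The uniform shift is essential: without it the competing codewords would live only in subgroups $H^n$ of $G^n$, and the per-coset capacities $C_{|S|}^U$ for $S\ne H$ would never surface in the bound.
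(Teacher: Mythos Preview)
Your overall architecture matches the paper's: random homomorphism from $\bigoplus_i R_i^{k_i}$ into $G^n$, uniform dither, joint-typicality decoding, and classification of competing messages by the valuation profile $\theta=(\theta_1,\dots,\theta_I)$ so that $\phi(a)$ is uniform on $H^n$ with $H\cong\bigoplus_i p_i^{\theta_i}R_i$, together with the count $|T_\theta|\le\exp(nRw_H)$. That part is fine and is exactly what the paper does.

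The gap is in your typicality step. The competing codeword is not $\phi(a)+V$ but $X+\phi(a)$ with $X=\phi(u(m))+V$ the true codeword; given $X=x$, it is uniform on $x+H^n=\prod_{N=1}^n(x_N+H)$, where the \emph{coordinates lie in different cosets of $H$}. It is never ``uniform on $S^n$'' for a single coset $S$, so the sentence ``conditionally on $\phi(a)+V$ being uniform on $S^n$, the probability that it is jointly typical with $Y^n$ is at most $\exp(-nC_{|S|}^U)$'' does not describe an event that occurs. And even if it did, ``averaging over cosets'' would produce $\sum_S\tfrac{|H|}{|G|}\exp(-nC_{|S|}^U)$, whose decay is governed by $\min_S C_{|S|}^U$, not by $\sum_S\tfrac{|H|}{|G|}C_{|S|}^U$. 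Your displayed bound with the coset average \emph{inside} the exponent therefore does not follow from the argument you wrote; as written you would only get $R\le \tfrac{1}{w_H}\min_S C_{|S|}^U$, which is strictly weaker than the theorem for asymmetric channels.

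The paper obtains the average in the exponent by a different route: it bounds directly the cardinality $|(x+H^n)\cap A_\epsilon^n(X|y)|$ for $X$ uniform on $G$ (its Lemma on coset--typical intersections), which yields the exponent $\log|H|-H([\hat X]_H\mid [X]_H,Y)=I([\hat X]_H;Y\mid [X]_H)$. Only \emph{after} this single-letter conditional mutual information appears does one expand over the values of $[X]_H$ to get $\sum_S\tfrac{|H|}{|G|}C_{|S|}^U$. In other words, the coset average enters because the \emph{true} codeword is uniform on $G$ (so a fraction $|H|/|G|$ of its coordinates sit in each coset), not because the competing codeword is ever confined to a single $S^n$. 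If you rewrite your third paragraph to count typical sequences in $x+H^n$ and identify the exponent as $I([\hat X]_H;Y\mid[X]_H)$, the rest of your outline goes through.
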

The subgroup $S$ of $H$ that achieves the maximum value for $C_{|S|}^U$, is called the optimal subchannel corresponding to the subgroup $H$ and is denoted by $H^*$.\\
\textbf{Proof:} We construct an ensemble of homomorphic encoders over $G$ with block length $n$ and put a uniform distribution over the ensemble. Then we calculate the expected average probability of error over the ensemble and observe that for rates less than $C_G$, the average probability of error can be made arbitrarily small by increasing the block length.
\subsubsection{Construction of the ensemble of codes}
Let $w_i$, $i=1,2,\cdots,I$ be a set of nonnegative rational weights assigned to each module $R_i$ such that $\sum_{i=1}^{I}w_i=1$ and let $k$ be a nonnegative integer so that $w_ik$ is integer for all $i$. For each set of weights, we define an ensemble of codes by taking into account all homomorphisms $\varphi:\bigoplus_{i=1}^I R_i^{w_ik}\rightarrow G^n$. It is known that the image of a homomorphism is a subgroup of the target group \cite{group_hall}; Therefore any such homomorphism defines a group code $\mathds{C}$ over $G$. We add a random dither $v$ to the code to construct a random shifted group code.

Let $m=1,2,\cdots,M$ be the set of messages. Let $k$ be large enough so that a unique message representative $u(m)$ from the set $\bigoplus_{i=1}^I R_i^{w_ik}$ can be assigned to each message $m$. The encoding rule is given by $e(m)=\varphi(u(m))+v$ where $\varphi$ is an arbitrary homomorphism from $\bigoplus_{i=1}^I R_i^{w_ik}$ to $G^n$ and $v$ is a random vector in $G^n$.\\

At the decoder, after receiving the channel output $y$, decode it to the message $m$ if $m$ is the unique message such that $u(m)$ and $y$ are jointly $\epsilon$-typical. Otherwise declare error.\\

The standard generator of the ring $R_i=\mathds{Z}_{p_i^{r_i}}$ is the multiplicative identity of $R_i$. Define $e_{iK}$ to be the generator for the $K$th $R_i$ in $\bigoplus_{i=1}^I R_i^{w_ik}$ for $i=1,\cdots,I$ and $K=1,\cdots w_ik$. Then any element $a\in \bigoplus_{i=1}^I R_i^{w_ik}$ can be represented uniquely as $a=\sum_{i,K} a_{iK}e_{iK}$ where $a_{iK}\in R_i$. This decomposition will help us characterizing homomorphisms from $\bigoplus_{i=1}^I R_i^{w_ik}$ to $G^n$.
\begin{lemma}
Any homomorphism $\varphi:\bigoplus_{i=1}^I R_i^{w_ik}\rightarrow G^n$ can be represented as $\varphi=(\varphi_1,\varphi_2,\cdots,\varphi_n)$ where each $\varphi_N$, $N=1,\cdots,n$ is given by:
\begin{align*}
\phi_N(a)=\sum_{i,K}a_{iK}g_{iK}^N
\end{align*}
for some $g_{iK}^N$'s, $i=1,\cdots,I$, $K=1,\cdots,w_i k$ in $G$.
\end{lemma}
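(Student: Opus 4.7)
The plan is to exploit the universal property of the direct sum together with the coordinatewise structure of $G^n$. First I would observe that a homomorphism $\varphi:\bigoplus_{i=1}^I R_i^{w_ik}\rightarrow G^n$ is equivalent to an $n$-tuple of homomorphisms into $G$: since each projection $\pi_N:G^n\rightarrow G$ is itself a group homomorphism, the composition $\varphi_N:=\pi_N\circ\varphi$ is a homomorphism from $\bigoplus_{i=1}^I R_i^{w_ik}$ to $G$, and conversely $\varphi=(\varphi_1,\ldots,\varphi_n)$ recovers $\varphi$. This reduces the problem to proving the claimed formula for a single homomorphism into $G$.

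Next I would fix an index $N$ and define $g_{iK}^N:=\varphi_N(e_{iK})\in G$ for every pair $(i,K)$. For an arbitrary element $a=\sum_{i,K}a_{iK}e_{iK}$ of the domain, additivity of $\varphi_N$ gives $\varphi_N(a)=\sum_{i,K}\varphi_N(a_{iK}e_{iK})$. Since $a_{iK}e_{iK}$ is by definition the $a_{iK}$-fold sum of $e_{iK}$ in the Abelian group $\bigoplus R_i^{w_ik}$, applying additivity once more yields $\varphi_N(a_{iK}e_{iK})=a_{iK}\,\varphi_N(e_{iK})=a_{iK}g_{iK}^N$, where the scalar multiplication on the right is interpreted as repeated addition in $G$. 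Summing over $(i,K)$ produces exactly the formula in the lemma.

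The only point deserving care, and the closest thing to an obstacle, is the well-definedness of the expression $a_{iK}g_{iK}^N$: because $a_{iK}\in R_i=\mathds{Z}_{p_i^{r_i}}$ is only defined modulo $p_i^{r_i}$, one needs $p_i^{r_i}g_{iK}^N=0$ in $G$. This compatibility is automatic from our definition $g_{iK}^N=\varphi_N(e_{iK})$, since $p_i^{r_i}e_{iK}=0$ in the domain forces $p_i^{r_i}g_{iK}^N=\varphi_N(0)=0$; equivalently, each $g_{iK}^N$ necessarily lies in the $p_i^{r_i}$-torsion subgroup of $G$. With this observation in hand the lemma follows entirely from the universal property of direct sums, and no deeper algebraic machinery is required. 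The implicit torsion constraint on the $g_{iK}^N$ will, however, be the relevant ingredient for counting the size of the ensemble in the subsequent averaging argument.
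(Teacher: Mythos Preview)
Your proof is correct and is precisely the ``standard algebraic argument'' the paper invokes without spelling out: the paper's own proof consists of the single line ``Follows from standard algebraic arguments,'' so you have supplied exactly the details it omits. Your closing remark about the torsion constraint on the $g_{iK}^N$ is a worthwhile observation, though note that the paper's ensemble construction immediately following the lemma draws the $g_{iK}^N$ uniformly from all of $G$ rather than from the $p_i^{r_i}$-torsion subgroup, so the counting there does not actually use that constraint.
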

\begin{proof}
Follows from standard algebraic arguments.
\end{proof}
The lemma above facilitates the construction of the ensemble of codes as follows: Take random elements $g_{iK}^N$ from the group $G$ for $n=1,\cdots,N$, $i=1,\cdots,I$ and $k=1,\cdots,w_iK$ and construct the homomorphism $\varphi$ as mentioned in the lemma. Also take a random vector $v$ from $G^n$ and use the encoding rule $e(m)=\varphi(u(m))+v$.\\
The rate of the codes in this ensemble is given by:
\begin{align*}
R=\frac{k}{n}\sum_{i=1}^I w_i\log|R_i|=\frac{k}{n}\sum_{i=1}^I w_i r_i \log p_i
\end{align*}
\subsubsection{Error Analysis}
The expected value of the average probability of word error is given by:
\begin{align*}
&\nonumber\mathds{E}\left\{P_{avg}(err)\right\}=\sum_{m=1}^{M}\frac{1}{M}\sum_{x\in
G^n}P\left(e(m)=x\right) \sum_{\substack{\tilde{m}=1\\\tilde{m}\ne m}}^{M}\\
&\sum_{y\in A_\epsilon^n(Y|x)} \sum_{\tilde{x}\in A_\epsilon^n(X|y)}P\left(e(\tilde{m})=\tilde{x}, Y^n=y|e(m)=x\right)+O(\epsilon)\\
\end{align*}
We need two lemmas to proceed.

\begin{lemma}\label{problemma}
For arbitrary messages $m$ and $\tilde{m}$ and arbitrary vectors $x,\tilde{x}\in G^n$, define $a=u(m)-u(\tilde{m})$ and $h=x-\tilde{x}$. Define $\theta(m,\tilde{m})=(\theta_1,\theta_2,\cdots,\theta_I)$ where $\theta_i$ is the smallest number in $\{0,1,\cdots,r_{i}-1\}$ such that there exists an index $K\in \{1,2,\cdots,w_ik\}$ with the property $a_{iK}\in p_i^{\theta_i}R_i\backslash p^{\theta_i+1}R_i$. Then,
\begin{align*}
&\nonumber P(e(\tilde{m})=\tilde{x}|e(m)=x)=\\
&\left\{ \begin{array}{ll}
\prod_{i=1}^I\frac{1}{p_i^{n(r_i-\theta_i)}}
&\mbox{if $\tilde{x}\in x+\left[\bigoplus_{i=1}^I p_i^{\theta_i} R_i\right]^n$};\\
0 & \mbox{otherwise}.
\end{array}\right.
\end{align*}
Moreover, for a fixed $m$, let $T_\theta(m)$ be the set of all $\tilde{m}$ with $\theta(m,\tilde{m})=(\theta_1,\theta_2,\cdots,\theta_I)$, then
\begin{align*}
|T_\theta(m)|\le\prod_{i=1}^I\left[(p_i^{r_i-\theta_i})^{w_ik}\right]
\end{align*}
\end{lemma}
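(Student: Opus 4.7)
The plan is to reduce the conditional probability to computing $P(\varphi(a)=h)$ for $a=u(m)-u(\tilde{m})$ and $h=x-\tilde{x}$, and then to exploit two layers of independence---across the $n$ output coordinates and across the $I$ summands $R_i$ of $G$---in order to reduce matters to a one-dimensional calculation inside each cyclic ring $R_i$.

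The first observation is that $e(m)-e(\tilde{m})=\varphi(u(m)-u(\tilde{m}))=\varphi(a)$; since the uniform dither $v$ makes the encoding uniform on $G^n$ for any fixed $\varphi$, conditioning on $e(m)=x$ pins $v = x-\varphi(u(m))$ in terms of $\varphi$, and the event $\{e(\tilde{m})=\tilde{x}\}$ becomes $\{\varphi(a)=h\}$. The previous lemma then writes $\varphi=(\varphi_1,\dots,\varphi_n)$ with $\varphi_N(a)=\sum_{i,K} a_{iK} g_{iK}^N$ and with i.i.d.\ generators across $N$, so $P(\varphi(a)=h)=\prod_{N=1}^n P(\varphi_N(a)=h_N)$.

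For a single coordinate $N$, within the ensemble where $g_{iK}^N$ is uniform on the summand $R_i$, each term $a_{iK}g_{iK}^N$ lies in $R_i$, so $\varphi_N(a)$ decomposes in $\bigoplus_i R_i$ as $(\sum_K a_{iK} g_{iK}^N)_i$ with independent components. Within the cyclic ring $R_i=\mathds{Z}_{p_i^{r_i}}$, the image of the homomorphism $(g_K)\mapsto \sum_K a_{iK} g_K$ is the ideal $\gcd_K(a_{iK})R_i=p_i^{\theta_i}R_i$, using the definition of $\theta_i$ as the minimum $p_i$-adic valuation among the $a_{iK}$, and uniform $g_K$'s push forward to the uniform distribution on this image. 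Hence each factor equals $p_i^{-(r_i-\theta_i)}$ when $h_N^i\in p_i^{\theta_i}R_i$ and vanishes otherwise; multiplying over $i$ and $N$ yields the claimed formula.

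For the cardinality bound, the condition $\theta(m,\tilde{m})=\theta$ forces every $a_{iK}$ to lie in $p_i^{\theta_i}R_i$; for each $i$ there are at most $|p_i^{\theta_i}R_i|^{w_i k}=(p_i^{r_i-\theta_i})^{w_i k}$ such vectors, and injectivity of $u$ lets $a$ determine $\tilde{m}$ uniquely given $m$. The main subtlety (and the likely obstacle) is justifying the summand-wise split: when the primes $p_i$ are distinct, the constraint $p_i^{r_i}g_{iK}^N=0$ automatically confines $g_{iK}^N$ to the $R_i$-summand, but in the repeated-prime case one must restrict the ensemble so that $g_{iK}^N$ is drawn uniformly from $R_i$ rather than from the full $p_i^{r_i}$-torsion subgroup of $G$, after which the decomposition---and hence the formula---goes through without affecting the achievability conclusion.
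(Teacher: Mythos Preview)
Your proof is correct and follows essentially the same strategy as the paper: reduce the conditional probability to $P(\varphi(a)=h)$ via the dither, factor across the $n$ output coordinates and the $I$ ring summands by independence, and then compute inside a single $\mathds{Z}_{p^r}$ (the paper by explicit solution counting, you by the equivalent observation that a surjective $R_i$-module map pushes the uniform law forward to the uniform law on its image $p_i^{\theta_i}R_i$). Your explicit treatment of the repeated-prime subtlety is in fact more careful than the paper, which in the ensemble description draws $g_{iK}^N$ ``from the group $G$'' but in the proof silently treats it as an element of $R_i$; restricting the ensemble as you suggest is the clean fix and does not affect the achievability argument.
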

\begin{proof}
Provided in the appendix.
\end{proof}

\begin{lemma}\label{cosettypicallemma}
Let $y\in \mathcal{Y}^n$ be an arbitrary channel output sequence. For any $x\in A_\epsilon^n(X|y)$, we have
\begin{align*}
&\left|\left(x+\left[\bigoplus_{i=1}^I p_i^{\theta_i} R_i\right]^n\right)\cap A_\epsilon^n(y)\right|\\
&\le \prod_{i=1}^I2^{n\left[H(X_i|Y)-H([X_i]_{\theta_i})+O(\epsilon)\right]}
\end{align*}
where $X_i$ is the $i$th component of the channel input random variable $X$. i.e. $X\longleftrightarrow(X_1,X_2,\cdots,X_I)$ where $X_i\in R_i$ and the random variable $[X_i]_{\theta_i}$ takes values from the set of cosets of $p_i^{\theta_i}R_i$ in $R_i$.
\end{lemma}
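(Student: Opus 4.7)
The plan is to exploit the direct-sum decomposition $G\cong\bigoplus_{i=1}^I R_i$ so that both the coset and the typicality constraint factor across the $I$ components, reducing the count to a product of per-component counts, each of which I would bound by the method of types.

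First, identify $G^n$ with $\prod_{i=1}^I R_i^n$ and write $x=(x_1,\ldots,x_I)$ with $x_i\in R_i^n$, so that the coset splits as a Cartesian product
\begin{equation*}
x+H^n=\prod_{i=1}^I\left(x_i+(p_i^{\theta_i}R_i)^n\right).
\end{equation*}
Strong typicality is preserved under coordinate marginalisation, since the empirical count of $(a_i,b)$ in $(x'_i,y)$ is the sum of the counts of $(a,b)$ in $(x',y)$ over all $a$ projecting to $a_i$. Hence $(x',y)\in A_\epsilon^n(X,Y)$ implies $(x'_i,y)\in A_{\epsilon'}^n(X_i,Y)$ for each $i$, and combining this with the product structure of the coset gives
\begin{equation*}
\left|(x+H^n)\cap A_\epsilon^n(X|y)\right|\le\prod_{i=1}^I\left|\left(x_i+(p_i^{\theta_i}R_i)^n\right)\cap A_{\epsilon'}^n(X_i|y)\right|.
\end{equation*}
Thus it suffices to bound each $i$th factor by $2^{n[H(X_i|Y)-H([X_i]_{\theta_i})+O(\epsilon)]}$.

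For fixed $i$, I would count sequences $x'_i\in R_i^n$ that are jointly typical with $y$ and whose coordinate-wise coset projection under $\pi : R_i^n\to(R_i/p_i^{\theta_i}R_i)^n$ matches the fixed pattern $[x_i]_{\theta_i}$. Standard method-of-types bounds give $|A_{\epsilon'}^n(X_i|y)|\le 2^{n[H(X_i|Y)+O(\epsilon)]}$, while the projected variable $[X_i]_{\theta_i}$ has marginal entropy $H([X_i]_{\theta_i})$, so $\pi(A_{\epsilon'}^n(X_i|y))$ contains roughly $2^{nH([X_i]_{\theta_i})}$ typical coset patterns. Since the cosets of $(p_i^{\theta_i}R_i)^n$ in $R_i^n$ are translates of one another, the typical sequences should distribute roughly evenly across these patterns, giving each a share of at most $2^{n[H(X_i|Y)-H([X_i]_{\theta_i})+O(\epsilon)]}$. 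Multiplying over $i$ yields the lemma.

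The main obstacle will be making the ``equi-distribution across coset patterns'' step rigorous for a worst-case $y$. I would treat it type by type: for each joint type $Q_{X_i,Y}$ within $\epsilon$ of $p_{X_i,Y}$, partition the conditional type class by coset pattern $\pi(\cdot)$, and use translation by $(p_i^{\theta_i}R_i)^n$ together with the standard type-class size estimate to bound the type-$Q$ contribution to any single pattern by the type-class size divided by the number of realisable marginal coset-pattern types. Summing over the polynomially many joint types compatible with $\epsilon$ absorbs into the $O(\epsilon)$ slack, completing the per-component bound.
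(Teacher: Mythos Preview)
Your proposal is correct and follows essentially the same route as the paper: factor the coset as $\prod_i\bigl(x_i+(p_i^{\theta_i}R_i)^n\bigr)$, show via the same triangle-inequality computation on empirical counts that joint $\epsilon$-typicality of $(x',y)$ implies componentwise typicality of each $(x'_i,y)$, and conclude $|S|\le\prod_i|S_i|$. The only difference is in the per-component bound: the paper does not argue equi-distribution across coset patterns at all but simply invokes Lemma~5 of \cite{dinesh_dsc} to obtain $|S_i|\le 2^{n[H(X_i|Y)-H([X_i]_{\theta_i})+O(\epsilon)]}$ directly, whereas you sketch a type-by-type reproof of that cited result.
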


\begin{proof}
Provided in the appendix.
\end{proof}

The following lemma presents an upper bound on the average probability of error.
\begin{lemma}\label{upperbound}
The average probability of error over the ensemble is bounded above by:
\begin{align*}
\mathds{E}\left\{P_{avg}(err)\right\}\le \sum_\theta \exp_2\left\{-n\sum_{i=1}^I \left[(r_i-\theta_i)\log p_i\right.\right.&\\
\left.\left.-\frac{w_i k}{n}(r_i-\theta_i)\log p_i-H(X_i|Y)+H([X_i]_{\theta_i}|Y)\right]\right\}&\\
\end{align*}
\end{lemma}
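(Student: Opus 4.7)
The plan is to start from the given expression for $\mathds{E}\{P_{avg}(\mathrm{err})\}$ and estimate each factor in turn. The first move is to use the conditional independence implicit in the channel model: given $e(m)=x$, the output $Y^n$ depends only on $x$ and not on $e(\tilde m)$, so
$$P\bigl(e(\tilde m)=\tilde x,\,Y^n=y\,\bigm|\,e(m)=x\bigr) \;=\; W^n(y\mid x)\,P\bigl(e(\tilde m)=\tilde x\,\bigm|\,e(m)=x\bigr).$$
Next, I would partition the inner sum over $\tilde m\ne m$ according to the value of $\theta(m,\tilde m)\in\prod_i\{0,\dots,r_i\}$, grouping messages by the way their differences $u(m)-u(\tilde m)$ decompose in the $R_i$-components.

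Within each $\theta$-class, Lemma \ref{problemma} supplies both multiplicative factors I need. It tells me that $P(e(\tilde m)=\tilde x\mid e(m)=x)$ is supported on the coset $x+[\bigoplus_i p_i^{\theta_i}R_i]^n$, where it takes the constant value $\prod_i p_i^{-n(r_i-\theta_i)}$, and that $|T_\theta(m)|\le\prod_i p_i^{(r_i-\theta_i)w_ik}$ uniformly in $m$. Consequently, the sum over $\tilde x$ reduces to counting elements of $A^n_\epsilon(X\mid y)\cap\bigl(x+[\bigoplus_i p_i^{\theta_i}R_i]^n\bigr)$, and Lemma \ref{cosettypicallemma} bounds this count by $\prod_i 2^{n[H(X_i\mid Y)-H([X_i]_{\theta_i}\mid Y)+O(\epsilon)]}$. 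Crucially, this bound is uniform in the particular typical $y$.

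With these estimates in hand, I would assemble the remaining sums. The uniform-in-$y$ intersection bound can be pulled outside $\sum_{y\in A^n_\epsilon(Y|x)} W^n(y\mid x)$, leaving behind a factor of at most $1$. Similarly, $\sum_m\frac{1}{M}\sum_x P(e(m)=x)=1$ removes the message and codeword averaging, since every upstream estimate was uniform in $(m,x)$. What remains is a sum over $\theta$, each summand being the product of three factors whose logarithms combine as
$$-n(r_i-\theta_i)\log p_i \;+\; w_ik(r_i-\theta_i)\log p_i \;+\; n\bigl[H(X_i\mid Y)-H([X_i]_{\theta_i}\mid Y)\bigr],$$
summed over $i$, which upon factoring out $-n$ and absorbing the $O(\epsilon)$ terms gives exactly the claimed exponent.

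The main obstacle is bookkeeping: the product over $i=1,\dots,I$ must be threaded carefully through every step so that the exponents line up as stated, and the decomposition of $G$ into the $R_i$ factors must be respected in each bound. The only substantive check is that $|T_\theta(m)|$ and $P(e(\tilde m)=\tilde x\mid e(m)=x)$ depend on the pair $(m,\tilde m)$ only through $\theta(m,\tilde m)$, which is precisely the content of Lemma \ref{problemma}, and that the intersection bound in Lemma \ref{cosettypicallemma} is indeed uniform in $y$, so it can be extracted from the $y$-sum without incurring additional factors.
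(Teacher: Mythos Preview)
Your proposal is correct and mirrors the paper's proof step for step: factor via conditional independence, partition the $\tilde m$-sum by $\theta$, invoke Lemma~\ref{problemma} for both the conditional probability and $|T_\theta(m)|$, invoke Lemma~\ref{cosettypicallemma} for the coset--typical intersection, collapse $\sum_{y} W^n(y\mid x)\le 1$ and $\sum_m\frac{1}{M}\sum_x P(e(m)=x)=1$, and read off the exponent. One cosmetic note: Lemma~\ref{cosettypicallemma} as stated in the paper has $H([X_i]_{\theta_i})$ rather than $H([X_i]_{\theta_i}\mid Y)$, and the paper's own derivation carries the unconditional form until the very last displayed line, where it silently switches to the conditional version to match the lemma statement---so your slight misquote of Lemma~\ref{cosettypicallemma} actually lands you directly on the target.
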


\begin{proof}
Provided in the appendix.
\end{proof}
Each random variable $X_i$ can be represented by a tuple $([X_i]_{\theta_i},[\hat{X}_i]_{\theta_i})$ where $[X_i]_{\theta_i}$ indicates the coset selection and $[\hat{X}_i]_{\theta_i}$ the value selection in the subgroup $p_i^{\theta_i}R_i$ of $R_i$. Note that $[X_i]_{\theta_i}$ and $[\hat{X}_i]_{\theta_i}$ are independent. We get,
\begin{align*}
\mathds{E}\left\{P_{avg}(err)\right\}\le \sum_\theta \exp_2\left\{-n\sum_{i=1}^I \left[(r_i-\theta_i)\log p_i\right.\right.&\\
\left.\left.-\frac{w_i k}{n}(r_i-\theta_i)\log p_i-H([\hat{X}_i]_{\theta_i}|[X_i]_{\theta_i},Y)\right]\right\}&\\
\end{align*}
Therefore, the probability of error can be made arbitrarily small if for all $\theta$,
\begin{align*}
\sum_{i=1}^I \frac{w_i k}{n}(r_i-\theta_i)&\log p_i\\
&\le \sum_{i=1}^I (r_i-\theta_i)\log p_i-H([\hat{X}_i]_{\theta_i}|[X_i]_{\theta_i},Y)
\end{align*}
Let $X$ be a uniform random variable over $G$ and let $H$ be the subgroup of $G$ isomorphic to $\bigoplus_{i=1}^I p_i^{\theta_i} R_i$. The variable $X$ can be thought of as a uniform variable over a random coset of $H$ in $G$. Random selection of the coset is due to the random dither and we prove in Lemma \ref{uniform_dist} that the uniformity of the distribution over the coset is due to the group structure of the code. The variable $X$ can be represented by two random variables $[\hat{X}]_H$ and $[X]_H$ where $[\hat{X}]_H$ is uniform over $H$ and $[X]_H$ has a uniform distribution over cosets of $H$ in $G$ and represents the coset selection. The variable $[\hat{X}]_H$ itself can be represented by a tuple $([\hat{X}_1]_{\theta_1},\cdots,[\hat{X}_I]_{\theta_I})$ where the for each $i$ the random variable $[\hat{X}_i]_{\theta_i}$ is a uniform variable over $p_i^{\theta_i} R_i$ and $[\hat{X_i}]_{\theta_i}$'s are independent from each other and from $[X]_H$. The random variable $[X]_H$ can also be represented by a tuple $([{X}_1]_{\theta_1},\cdots,[{X}_I]_{\theta_I})$ where for each $i$ the random variable $[{X_i}]_{\theta_i}$ is a uniform variable over cosets of $p_i^{\theta_i} R_i$ in $R_i$ and $[{X}_i]_{\theta_i}$'s are independent from each other and from $[\hat{X}]_H$.
\begin{align*}
I([\hat{X}]_H;Y|&[X]_H)\\
&=I\left([\hat{X}_1]_{\theta_1},\cdots,[\hat{X}_{\theta_I}];Y|[{X}_1]_{\theta_1},\cdots,[{X}_{\theta_I}]\right)\\
&=\sum_{i=1}^I I([\hat{X}_i]_{\theta_i},Y|[{X}_i]_{\theta_i})\\
&=\sum_{i=1}^I (r_i-\theta_i)\log p_i-H([\hat{X}_i]_{\theta_i}|[X_i]_{\theta_i},Y)
\end{align*}
Therefore, the achievability condition is equivalent to
\begin{align*}
\sum_{i=1}^I \frac{w_i k}{n}(r_i-\theta_i)\log p_i\le I([\hat{X}]_H;Y|[X]_H)
\end{align*}
Where $H=\bigoplus_{i=1}^I p_i^{\theta_i} R_i$.\\
The rate of the code is given by $R=\frac{k}{n}\sum_{i=1}^I w_i r_i \log p_i$. Therefore, this condition is equivalent to
\begin{align*}
R\cdot\frac{\sum_{i=1}^I w_i(r_i-\theta_i)\log p_i}{\sum_{i=1}^I w_i r_i\log p_i}\le I([\hat{X}]_H;Y|[X]_H)
\end{align*}
Define $w_H=\frac{\sum_{i=1}^I w_i(r_i-\theta_i)\log p_i}{\sum_{i=1}^I w_i r_i\log p_i}$ to get
\begin{align*}
R\le \frac{1}{w_H}I([\hat{X}]_H;Y|[X]_H)
\end{align*}
Note that
\begin{align*}
I([\hat{X}]_H;&Y|[X]_H)\\
&=\sum_{S \mbox{ coset of } H}p([X]_H=S)I([\hat{X}]_H;Y|[X]_H=S)\\
&=\sum_S \frac{|H|}{|G|}C_S^U
\end{align*}
Since this condition must be satisfied for every subgroup $H$ of $G$ and the weights $w_i$ are arbitrary, we conclude that the rate
\begin{align*}
R^*=\max_{\substack{w_1,\cdots,w_I\\w_1+\cdots+w_I=1}}\min_{H\le G} \sum_S \frac{|H|}{|G|}\frac{C_S^U}{w_H}
\end{align*}
is achievable using group codes over $G$. The weights $w_i$ can be represented as $w_i=\frac{\sum_{i=1}^I k w_i(r_i-\theta_i)\log p_i}{\log M}=\sum_{i=1}^I\frac{r_i-\theta_i}{r_i}w'_i$ where $w'_i=\frac{kw_i\log |R_i|}{\log M}$. Since $\sum_{i=1}^I w'_i=1$, the given achievable rate region is equivalent to:
\begin{align*}
R^*=\max_{\substack{w_1,\cdots,w_I\\w_1+\cdots+w_I=1}}\min_{H\le G} \sum_{S \mbox{ coset of } H} \frac{|H|}{|G|}\frac{C_S^U}{w_H}
\end{align*}
where $w_H=\sum_{i=1}^I \frac{r_i-\theta_i}{r_i}w_i$. Here we have replaced $w'_i$'s with $w_i$'s for simplicity of notation.

\subsection{Upper bound}
\begin{deff}
A subgroup $H$ of $G$ is called maximal for the channel $(G,\mathcal{Y},W)$ if for all subgroups $S$ of $H$, $C_{|H^*|}^U\ge C_{|S^*|}^U$.
\end{deff}
\begin{theorem}
An upper bound on the capacity of group codes over the group $G\cong \bigoplus_{i=1}^{I}R_i$ for a memoryless channel $(G,\mathcal{Y},W)$ is given by:
\begin{align*}
C_G\le \max_{\substack{w_1,\cdots,w_I\\w_1+\cdots+w_I=1}}\min_{H \mbox{ maximal}} \max_{S \mbox{ coset of } H} \frac{C_{|S|}^U}{w_H}
\end{align*}
where $w_H=\sum_{i=1}^I \frac{r_i-\theta_i}{r_i}w_i$ for $H\cong \bigoplus_{i=1}^I p_i^{\theta_i} R_i$ and $C_{|S|}^U$ is the mutual information between the channel input and output when the input distribution is uniform over the subset $S$ of $G$.
\end{theorem}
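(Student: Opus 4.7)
The plan is to exploit the algebraic structure of the code by projecting modulo appropriate subgroups $H \le G$ and applying a Fano-type converse to each projected code. Each projection gives an upper bound on the rate of the form $C_{|S|}^U/w_H$; minimizing over $H$ and optimizing the weight profile yields the claimed bound.

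First, I would fix an achievable rate $R$ and a sequence of shifted group codes $\{\mathds{C}_n + v_n, f_n\}$ with rate tending to $R$ and error tending to zero. The subgroup $\mathds{C}_n \le G^n \cong \bigoplus_i R_i^n$ admits a canonical generator decomposition via the structure theorem already used in the paper. After passing to a subsequence, the asymptotic fraction of the total log-order contributed by generators of $R_i$-type defines weights $w_i \ge 0$ with $\sum_i w_i = 1$. These $w_i$ correspond to the outer maximizer in the theorem.

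Second, for each subgroup $H \le G$ with $H \cong \bigoplus_i p_i^{\theta_i}R_i$, apply the coordinatewise projection $\pi_H^n: G^n \to (G/H)^n$ to obtain a shifted group code over $G/H$. This projected code, combined with the original channel, provides a reliable transmission system whose per-coordinate input is uniform over some coset of $H$ in $G$ (the coset being selected by $v_n$ in that coordinate, using the fact that uniform messages through a group encoder produce uniform marginals on each image subgroup, a fact formalized in Lemma \ref{uniform_dist} of the lower-bound proof). A standard Fano-plus-single-letterization argument bounds the rate of the projected code by $\max_{S} C_{|S|}^U$, where $S$ ranges over cosets of $H$; the max (rather than an average) appears because we cannot control which coset each coordinate sees. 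Next, quantifying how much of the original rate survives the projection, generators lying in $H^n$ are killed and an $R_i$-type generator contributes only its $R_i/p_i^{\theta_i}R_i$ part, of relative log-size $(r_i-\theta_i)/r_i$. Hence the projected rate equals $w_H R$ with $w_H = \sum_i \frac{r_i-\theta_i}{r_i} w_i$, yielding $R \le \max_{S} C_{|S|}^U/w_H$ for every $H$.

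The main obstacle I expect is justifying the restriction of the minimum to \emph{maximal} $H$ rather than all subgroups. If $H$ is not maximal, there is a proper subgroup $S \le H$ with $C_{|S^*|}^U > C_{|H^*|}^U$; the projection via the ambient group that makes $S$ the effective bottleneck produces a strictly tighter bound, so the bound obtained via $H$ is redundant and may be dropped from the minimum. Formalizing this requires a careful case analysis showing that the tightest bound generated by this projection technique always arises from a maximal subgroup. A secondary subtlety is that the max over cosets $S$ (versus an average in the lower bound) is essentially forced by our inability to coordinate the shift $v_n$ across coordinates, and this is the principal source of the gap between the two bounds.
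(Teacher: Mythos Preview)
Your approach has the right skeleton but contains two intertwined errors. First, the paper does not project onto $(G/H)^n$; it \emph{restricts} to the subcode $\mathds{C}_S=(\mathds{C}\cap H^n)+v$, which is a genuine shifted group code transmitted over the original channel $(G,\mathcal{Y},W)$ and inherits vanishing error from $\mathds{C}+v$. Your quotient map $\pi_H^n$ does not yield a code for this channel, since the input alphabet becomes $G/H$ while the channel still expects inputs in $G$. Moreover, your rate computation is inconsistent with the quotient interpretation: under projection to $G/H$ an $R_i$-generator contributes $|R_i/p_i^{\theta_i}R_i|=p_i^{\theta_i}$, i.e.\ relative log-size $\theta_i/r_i$, so the projected rate would be $(1-w_H)R$, not $w_H R$. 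The quantity $w_H R$ is precisely the rate of the \emph{intersection} code $(\mathds{C}\cap H^n)+v$, which is what the paper actually uses.

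Second, and this is the substantive gap, your justification for restricting the minimum to maximal $H$ is backwards. You argue that non-maximal $H$ yields a redundant (weaker) bound that may be discarded. In fact, for non-maximal $H$ the argument yields \emph{no bound at all}. The single-letter marginals of the restricted code $(\mathds{C}\cap H^n)+v$ are uniform not over cosets of $H$ itself, but over cosets of (possibly proper) subgroups of $H$: Lemma~\ref{uniform_dist} only guarantees uniformity over the projection subgroup $P_{[N,N]}(\mathds{C}\cap H^n)\le H$, which can be strictly smaller than $H$. If $H$ is not maximal, some subgroup $S\le H$ may satisfy $C_{|S^*|}^U>C_{|H^*|}^U$, and then $\frac{1}{n}\sum_N I(X_N;Y_N)$ can exceed $\max_{S\text{ coset of }H}C_{|S|}^U$, so the Fano step fails. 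Maximality is exactly the hypothesis that rules this out: it forces every coset of every subgroup of $H$ to have mutual information at most $C_{|H^*|}^U$, so the converse goes through. Thus restricting to maximal $H$ \emph{weakens} the upper bound (fewer terms in the min), it is not a sharpening, and your ``redundancy'' argument does not address the actual role maximality plays.
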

\textbf{Proof:}
\subsubsection{Converse channel coding theorem}
Shannon's inverse channel coding theorem asserts that for rates $R>I(X;Y)$ lossless communication is not possible. For $i=1,2,\cdots,n$, let $X_i$ be the random variable representing the $N$th component of the codewords and $Y_i$ be the corresponding channel output. The rate is bounded above by $R<\frac{1}{n}\sum_{i=1}^n I(X_i,Y_i)$.\\
This theorem admits the generalization to the case where the single letter distribution of $X$ is constrained by the structure of the code. For the case of shifted group codes, the single letter distribution of $X$ can only be uniform on cosets of different subgroups of the underlying group.
\subsubsection{Uniform single letter distribution over cosets}
In the case of linear codes, the single letter distribution over the channel input symbols is confined to be uniform. This holds for group codes also; However, for group codes, it can be uniform over any subgroup of the channel input alphabet.
\begin{lemma}\label{uniform_dist}
For any group code $\mathds{C}\le G^n$ where $G$ is an arbitrary group, uniform multiletter distribution over messages induces a uniform single letter distribution over subgroups of $G$. i.e. the components of the channel input sequence are uniformly distributed over some subgroup of $G$ that varies for different components.
\end{lemma}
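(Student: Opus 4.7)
The plan is to exhibit, for each coordinate $i \in \{1,\ldots,n\}$, a subgroup $H_i \le G$ over which the marginal distribution of the $i$th component of a uniformly chosen codeword is uniform. The natural candidate is the image of the projection of $\mathds{C}$ onto coordinate $i$.

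First I would introduce the coordinate projection $\pi_i : G^n \to G$, which is a group homomorphism since the group operation on $G^n$ is componentwise. Restricting $\pi_i$ to the subgroup $\mathds{C} \le G^n$ gives a homomorphism $\pi_i|_\mathds{C} : \mathds{C} \to G$, and I would define
\begin{align*}
H_i \;=\; \pi_i(\mathds{C}) \;=\; \{\,x_i : (x_1,\ldots,x_n)\in \mathds{C}\,\}.
\end{align*}
Since the homomorphic image of a group is a group, $H_i$ is a subgroup of $G$. This identifies where the $i$th component lives and allows $H_i$ to depend on $i$, which matches the wording of the lemma.

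Next I would invoke the standard fact (a direct consequence of the first isomorphism theorem) that all fibers of a group homomorphism have the same cardinality, namely $|\ker(\pi_i|_\mathds{C})| = |\mathds{C}|/|H_i|$. Concretely, for any $h\in H_i$, fix some $x^\star\in\mathds{C}$ with $\pi_i(x^\star)=h$; then
\begin{align*}
\{\,x\in\mathds{C}:\pi_i(x)=h\,\} \;=\; x^\star + \ker(\pi_i|_\mathds{C}),
\end{align*}
a coset of the kernel inside $\mathds{C}$, hence of size $|\mathds{C}|/|H_i|$.

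Finally, if $X=(X_1,\ldots,X_n)$ is drawn uniformly from $\mathds{C}$, then for every $h\in H_i$
\begin{align*}
\Pr(X_i=h) \;=\; \frac{|\{x\in\mathds{C}:\pi_i(x)=h\}|}{|\mathds{C}|} \;=\; \frac{1}{|H_i|},
\end{align*}
so $X_i$ is uniform on $H_i$, and $\Pr(X_i=g)=0$ for $g\notin H_i$. I do not anticipate a real obstacle here; the only subtlety worth flagging is that $H_i$ genuinely depends on $i$ in general (different coordinates of $\mathds{C}$ need not have the same projection), and that the argument does not require $G$ to be Abelian — only that $\mathds{C}$ be a subgroup of $G^n$ — which is consistent with the lemma's hypothesis of an arbitrary group $G$.
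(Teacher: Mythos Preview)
Your proof is correct and follows essentially the same route as the paper: the paper also fixes a coordinate, takes the projection image as the subgroup $H$, identifies the kernel (there written $\mathds{C}_{[1,n-1]}$, the codewords with zero last component), and shows each fiber is a coset of the kernel of size $|\mathds{C}|/|H|$. The only cosmetic difference is that you invoke the first isomorphism theorem directly, whereas the paper unpacks the coset argument by hand and cites \cite{forney_dynamics} for the subgroup and quotient facts.
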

\begin{proof}
Without loss of generality we prove that the $n$th component of the codewords form a subgroup $H$ of $G$ and the uniform distribution over codewords induces a uniform distribution over $H$. Let $\{c_1,c_2,\cdots,c_M\}$ be the set of codewords and let $P_{[n,n]}(\mathds{C})=\{c_{1n},c_{2n},\cdots,c_{Mn}\}$ be the set of the $n$th components of the codewords. It has been shown in \cite{forney_dynamics} that $P_{[n,n]}(\mathds{C})$ is a subgroup of $G$. Set $H=P_{[n,n]}(\mathds{C})$ to conclude the first part of the claim.\\
Next, we need to show that the single letter distribution over $H$ is uniform. Let $H=\{h_1=0,h_2,\cdots,h_{|H|}\}$; then the lemma claims that the number of occurrences of each $h_i$ in the sequence $c_{1n},c_{2n},\cdots,c_{Mn}$ is the same. Let $\mathds{C}_{[1,n-1]}$ be the set of all codewords that are zero at the $n$th component. It is known that $\mathds{C}_{[1,n-1]}$ forms a normal subgroup of $\mathds{C}$ and $\mathds{C}/\mathds{C}_{[0,n-1]}\cong P_{[n,n]}(\mathds{C})=H$ \cite{forney_dynamics}. Therefore, $\frac{|\mathds{C}|}{|\mathds{C}_{[1,n-1]}|}=|H|$. The number of occurrences of $h_1=0$ in the sequence $c_{1n},c_{2n},\cdots,c_{Mn}$ is equal to $|\mathds{C}_{[1,n-1]}|$. For each $h^*\in H$, there exists a codeword $c^*\in \mathds{C}$ ending with $h^*$, and since $\mathds{C}$ is a group code, it is closed under addition and therefore $c^*+\mathds{C}_{[1,n-1]}$ is a subset of $\mathds{C}$. Since the codewords are distinct, the set $c^*+\mathds{C}_{[1,n-1]}$ contains $|\mathds{C}_{[1,n-1]}|$ codewords ending with $h^*$. We conclude that for each $h^*\in H$ the existence of at least $|\mathds{C}_{[1,n-1]}|$ codewords ending with $h^*$ is guaranteed. The equality $\frac{|\mathds{C}|}{|\mathds{C}_{[1,n-1]}|}=|H|$ imposes the number of occurrences of each $h^*$ to be equal to $|\mathds{C}_{[1,n-1]}|$. i.e. The single letter distribution over $H$ is uniform in the $n$th position.
\end{proof}
\begin{lemma}
For any shifted group code $\mathds{C}+v$ over $G$, uniform multiletter distribution over messages induces a uniform single letter distribution over cosets of subgroups of $G$.
\end{lemma}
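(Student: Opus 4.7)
The plan is to reduce the statement directly to Lemma \ref{uniform_dist} by exploiting the fact that the dither $v$ is a fixed (deterministic) vector, so its action on the $N$th coordinate is a translation that commutes with the averaging over messages. First I would fix an arbitrary coordinate $N\in\{1,\ldots,n\}$ and write the $N$th component of the transmitted codeword as $c_{mN}+v_N$, where $m$ is distributed uniformly on $\{1,\ldots,M\}$, $\{c_1,\ldots,c_M\}=\mathds{C}$, and $v_N\in G$ is a constant.

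Next I would invoke Lemma \ref{uniform_dist} on the underlying (unshifted) group code $\mathds{C}$ to conclude that the projection $P_{[N,N]}(\mathds{C})=\{c_{1N},\ldots,c_{MN}\}$ is a subgroup $H$ of $G$ and that, under the uniform distribution on messages, the random variable $c_{mN}$ is uniformly distributed on $H$. Since translation by $v_N$ in $G$ is a bijection that sends $H$ onto the coset $H+v_N$ and preserves cardinalities of preimages, the random variable $c_{mN}+v_N$ is uniformly distributed on the coset $H+v_N$. Because $N$ was arbitrary, this yields a uniform single-letter distribution on a coset of a subgroup at every coordinate, with the subgroup $H=P_{[N,N]}(\mathds{C})$ and the coset representative $v_N$ possibly depending on $N$.

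There is no substantive obstacle here: the whole argument is a deterministic translation on top of the already-proved Lemma \ref{uniform_dist}, and the only thing to verify carefully is that the randomness (uniform choice of message) is independent of the shift (which is automatic because $v$ is fixed before message transmission). Consequently the proof can be given in a few lines and does not require any further algebraic machinery beyond what was used for the unshifted case.
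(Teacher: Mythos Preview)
Your proposal is correct and is precisely the approach the paper takes: the paper's proof is the single line ``Immediate from the previous lemma,'' and your argument spells out exactly that immediacy by translating the uniform-on-$H$ conclusion of Lemma~\ref{uniform_dist} by the fixed coordinate $v_N$ to obtain a uniform distribution on the coset $H+v_N$.
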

\begin{proof}
Immediate from the previous lemma.
\end{proof}
\subsubsection{Converse coding appplied to subchannels}
Let $G\cong \bigoplus_{i=1}^I R_i$ be an Abelian group and let $H\cong \bigoplus_{i=1}^I p_i^{\theta_i}R_i$ be an arbitrary subgroup of $G$ and let $S$ be the optimal subchannel corresponding to the subgroup $H$. Using standard algebraic arguments we can show that for any shifted group code $\mathds{C}+v$ where $\mathds{C}\cong \bigoplus_{i=1}^I R_i^{k_i}$ and $v$ is an optimal coset selection vector, we have
\begin{align*}
&\mathds{C}_S=(\mathds{C}\cap H^n)+v=v+\bigoplus _{i=1}^I p_i^{\theta_i} R_i^{k_i}\\
&R_{\mathds{C}_S}=\frac{1}{n}\log \left|\mathds{C}\cap H^n\right|=\frac{1}{n}\sum_{i=1}^I(r_i-\theta_i)k_i\log p_i
\end{align*}
Define $w_i=\frac{r_i k_i \log p_i}{\log M}$ then we get $R_{\mathds{C}_S}=\sum_{i=1}^I \frac{r_i-\theta_i}{r_i}w_iR$.

\begin{lemma}
For a maximal subchannel $H$ of the channel $(G,\mathcal{Y},W)$, $C_{H,G}\le C_{H^*}^U$
\end{lemma}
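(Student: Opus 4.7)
The plan is to apply Shannon's converse to an arbitrary shifted group code over $H$ (viewed as a subgroup of $G$), using the structural result of Lemma~\ref{uniform_dist} to characterize the single-letter input distributions that such codes can realize, and then to exploit the maximality hypothesis to bound each per-position mutual information by $C_{H^*}^U$.

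Concretely, I would start by fixing a shifted group code $\mathds{C}+v$ with $\mathds{C}\le H^n$ and $v\in G^n$ achieving rate $R$ with vanishing error probability. Lemma~\ref{uniform_dist} applied to the group code $\mathds{C}\le H^n$ shows that, at each coordinate $i$, the projection $P_{[i,i]}(\mathds{C})$ is a subgroup $S_i\le H$ and that the induced $i$th marginal of a uniformly chosen codeword is uniform on $S_i$. Adding the dither $v_i$ shifts this to a uniform distribution on the coset $v_i+S_i$ of $S_i$ in $G$. Shannon's converse (via Fano's inequality) then yields
\begin{align*}
R\le \frac{1}{n}\sum_{i=1}^n I(X_i;Y_i)+o(1) = \frac{1}{n}\sum_{i=1}^n C_{v_i+S_i}^U+o(1).
\end{align*}

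From here, the definition of the optimal subchannel gives $C_{v_i+S_i}^U\le C_{S_i^*}^U$, since $v_i+S_i$ is one of the subsets of the subgroup $S_i\le H$ over which $C^U$ is compared when forming $S_i^*$; and the maximality of $H$ yields $C_{S_i^*}^U\le C_{H^*}^U$ for every $S_i\le H$. Averaging over $i$, letting $n\to\infty$, and sending the error probability to zero gives $R\le C_{H^*}^U$, whence $C_{H,G}\le C_{H^*}^U$.

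The main obstacle I anticipate is the chain $C_{v_i+S_i}^U\le C_{S_i^*}^U\le C_{H^*}^U$. The second inequality is exactly what the maximality of $H$ asserts, so it is immediate once the first is in place; the first inequality requires interpreting the optimal subchannel $S_i^*$ consistently with the coset-based mutual informations that arise from Shannon's converse, which is precisely the role played by the subchannel terminology introduced alongside $H^*$. Once that identification is made, no new machinery is needed beyond the definitions and Lemma~\ref{uniform_dist}, and the asymptotic step follows from standard Fano-type reasoning.
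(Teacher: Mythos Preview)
Your proposal is correct and follows essentially the same argument as the paper: apply Shannon's converse to bound $R$ by the average of single-letter mutual informations, invoke Lemma~\ref{uniform_dist} (and its coset corollary) to conclude that each $X_i$ is uniform on a coset of some subgroup $S_i\le H$, and then use the chain $C_{v_i+S_i}^U\le C_{S_i^*}^U\le C_{H^*}^U$, the second inequality being exactly the maximality hypothesis. The paper's proof is terser---it collapses your two-step chain into the single sentence ``Since $H$ is maximal, all of these distributions result in a mutual information less than $C_{|H^*|}^U$''---but the underlying logic is identical.
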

\begin{proof}
Shannon's coverse implies
\begin{align*}
R<\frac{1}{n}\sum_{i=1}^n I(X_i,Y_i)
\end{align*}
where $X_i$'s have uniform distributions over cosets of subgroups of $H$. Since $H$ is maximal, all of these distributions result in a mutual information less than $C_{|H^*|}^U$. Therefore, the average is also less than $C_{H^*}^U$. Conclude that $R<C_{H^*}^U$.
\end{proof}
The lemma implies
\begin{align*}
R_{\mathds{C}_S}=\sum_{i=1}^I \frac{r_i-\theta_i}{r_i}w_iR<C_{|H^*|}^U
\end{align*}
Therefore, for all maximal subchannels $H$, $R<\frac{C_{|H^*|}^U}{\sum_{i=1}^I \frac{r_i-\theta_i}{r_i}w_i}$. This proves the theorem.
\section{Special cases}\label{special_cases}

\subsection{Linear Codes}
The capacity of linear codes has been studied in \cite{ahlswede_alg_codes}. We show that for the case of linear codes over $\mathds{F}_q$, the upper and lower bounds are tight and are equal to the capacity given in \cite{ahlswede_alg_codes}. Let $\mathds{C}$ be a group code over the field $\mathds{F}_q$ for some prime number $q$. Since the only subgroups of $\mathds{F}_q$ are the trivial subgroup and the group $\mathds{F}_q$ itself, the lower bound reduces to $C_{|\mathds{F}_q|}^U$; And since $\mathds{F}_q$ is maximal in itself, the upper bound also reduces to $C_{|\mathds{F}_q|}^U$. Therefore the capacity of linear codes over $\mathds{F}_q$ is given by $C_{\mathds{F}_q}=C_{|\mathds{F}_q|}^U=I(X;Y)$ where $X$ has a uniform distribution over the input alphabet.
\subsection{Symmetric Channels}
For a symmetric channel, uniform input distribution over cosets of an arbitrary subgroup $H$ of $G$ results in the same mutual information with the channel output; This means all of the cosets of $H$ are optimal and we can pick $H^*=H$. The lower bound reduces to
\begin{align*}
C_G\ge \max_{\substack{w_1,\cdots,w_I\\w_1+\cdots+w_I=1}}\min_{H\le G} \frac{C_{|H|}^U}{w_H}
\end{align*}
Since all of the subgroups are maximal for a symmetric channel, the lower bound also reduces to the same expression. i.e. The capacity of group codes over symmetric channels is given by:
\begin{align*}
C_G= \max_{\substack{w_1,\cdots,w_I\\w_1+\cdots+w_I=1}}\min_{H\le G} \frac{C_{|H|}^U}{w_H}
\end{align*}
where $w_H=\sum_{i=1}^I \frac{r_i-\theta_i}{r_i}w_i$ for $H\cong \bigoplus_{i=1}^I p_i^{\theta_i} R_i$. The capacity of Abelian group codes over symmetric channels given in \cite{fagnani_abelian} coincides with the new result.

\section{Conclusion}
In this paper, we investigated the performance limits of Abelian group codes over discrete memoryless channels. Upper and lower bounds on the capacity of such codes has been computed and we presented two special cases where the bounds match. Our results unify the known results on the capacity of structured codes for the point to point channel coding problem and states the information theoretic performance limits of structured codes based on the algebraic structure of the underlying group.

\bibliographystyle{plain}
\bibliography{ariabib}

\pagebreak
\section{Appendix}
\subsection{proof of lemma \ref{problemma}}
Let $a=u(m)-u(\tilde{m})$ and $h=x-\tilde{x}$. First assume $a\in \mathds{Z}_{p^r}^k$ and $h\in \mathds{Z}_{p^r}^n$ and let $G\in \mathds{Z}_{p^r}^{k\times n}$ be a random matrix and $v\in\mathds{Z}_{p^r}^k$ be a random vector. In order to calculate the probability $P(u(m)G+v=x,u(\tilde{m})G+v=\tilde{x})=P\left(aG=h,u(m)G+v=x\right)$ we need to count the
number of solutions of $\sum_{l=1}^{k}{a_l g_l}=h$ where $a_l$'s are the elements of $a$ and $g_l$'s are the rows of the generator matrix $G$. There are $r$ cases that we should take into account.
\begin{itemize}
\item \textbf{Case 0:} There exists an index $m$ such that $a_m \in \mathbb{Z}_{p^r}\backslash p\mathbb{Z}_{p^r}$.\\
In this case, the number of choices for $G$ and $v$ is equal to $p^{nr(k-1)}$ whereas the total number of choices is equal to $p^{nr(k+1)}$. Since the matrix G and the vector $v$ are  chosen equilikely, we have:
\begin{align*}
P\left(aG=h,uG+v=x\right)=\frac{p^{nr(k-1)}}{p^{nr(k+1)}}=\frac{1}{p^{2nr}}
\end{align*}
Let $T_0(m)$ be the set of all indices $\tilde{m}$ that fall in this
category. Then we have $|T_0(m)|=(p^r)^k-(p^{r-1})^k$.

\item \textbf{Case $\theta$ ($\theta=1,2,\ldots,r-1$):} The conditions is cases $0$ up to $\theta-1$ are not satisfied and there exists an index $m$ such that $a_m \in p^\theta\mathbb{Z}_{p^r}\backslash p^{\theta+1}\mathbb{Z}_{p^r}$.\\
In this case, if $h\notin (p^\theta \mathbb{Z}_{p^r})^n$ there are no choices fore $G$ and $v$. Otherwise, the number of choices for $G$ and $v$ is equal to $p^{n\theta} p^{nr(k-1)}$ where as the total number of choices is equal to $p^{nr(k+1)}$. Since the matrix G and the vector $v^n$ are chosen equilikely, we have:
\begin{align*}
\nonumber P(aG= & h,u^k(i)G+v^n=x^n)=\\
&\left\{ \begin{array}{ll}
\frac{p^{n\theta}p^{nr(k-1)}}{p^{nr(k+1)}}= \frac{1}{p^{n(2r-\theta)}}
&\mbox{if $\tilde{x}\in x+(p^\theta \mathbb{Z}_{p^r})^n$};\\
0 & \mbox{otherwise}.
\end{array}\right.
\end{align*}
Let $T_\theta(m)$ be the set of all indices $\tilde{m}$ that fall in this category. Then we have $|T_\theta(m)|=(p^{r-\theta})^k-(p^{r-\theta-1})^k$.
\end{itemize}
Our original problem can be addressed by using the above result for each ring $R_i$ for which we take the matrix $G_{K,N}=(g_{iK}^N)$ and replace $k$ by $w_ik$. Since the elements of $v_N$ and $g_{iK}^N$ are chosen uniformly, their components are independent across different rings $R_i$. Therefore, the joint probability is the product of probabilities for each ring and the total number of such indices is the product of the number of possible indices for each ring $R_i$. Therefore,
\begin{align*}
&\nonumber P(e(m)=x,e(\tilde{m})=\tilde{x})=\\
&\left\{ \begin{array}{ll}
\prod_{i=1}^I\frac{1}{p_i^{n(2r_i-\theta_i)}}
&\mbox{if $\tilde{x}\in x+\left[\bigoplus_{i=1}^I p_i^{\theta_i} R_i\right]^n$};\\
0 & \mbox{otherwise}.
\end{array}\right.
\end{align*}
Alternatively,
\begin{align*}
&\nonumber P(e(\tilde{m})=\tilde{x}|e(m)=x)=\\
&\left\{ \begin{array}{ll}
\prod_{i=1}^I\frac{1}{p_i^{n(r_i-\theta_i)}}
&\mbox{if $\tilde{x}\in x+\left[\bigoplus_{i=1}^I p_i^{\theta_i} R_i\right]^n$};\\
0 & \mbox{otherwise}.
\end{array}\right.
\end{align*}
Moreover, for a fixed $m$, let $T_\theta(m)$ be the set of all $\tilde{m}$ with $\theta(m,\tilde{m})=(\theta_1,\theta_2,\cdots,\theta_I)$, then
\begin{align*}
|T_\theta(m)|=\prod_{i=1}^I\left[(p_i^{r_i-\theta_i})^{w_ik}(1-p_i^{-w_ik})\right]
\end{align*}
Therefore,
\begin{align*}
|T_\theta(m)|\le\prod_{i=1}^I\left[(p_i^{r_i-\theta_i})^{w_ik}\right]
\end{align*}
This result can also be confirmed by the straightforward method.

\subsection{proof of \ref{cosettypicallemma}}
Let $x,y,z\in G^n$ where $x=(x_1,\cdots,x_I)$, $y=(y_1,\cdots,y_I)$ and $z=(z_1,\cdots,z_I)$ where $x_i,y_i,z_i\in R_i^n$. For $i=1,\cdots,I$, define
\begin{align*}
&S=\left(x+\left[\bigoplus_{i=1}^I p_i^{\theta_i} R_i\right]^n\right)\cap A_\epsilon^n(X|y)\\
&S_i=\left(x_i+\left[p_i^{\theta_i} R_i\right]^n\right)\cap A_\epsilon^n(X_i|y)\\
\end{align*}
where $X$ is uniform over $G$ and $X_i$ is uniform over $R_i$. First we show that $S\subseteq S_1\times\cdots S_I$. Let $z\in S$; Since $z\in A_\epsilon^n(X|y)$,
\begin{align*}
\left|\frac{1}{n}N(a,b|z,y)-p_{XY}(a,b)\right|\le \frac{\epsilon}{|G|\cdot |\mathcal{Y}|}
\end{align*}
for arbitrary $a=(a_1,\cdots,a_I)\in G$ and $b=(b_1,\cdots,b_I)\in G$. We have:
\begin{align*}
&\left|\frac{1}{n}N(a_1,b|z_1,y)-p_{X_1Y}(a_1,b)\right|\\
&=\left|\frac{1}{n}\sum_{a_2,\cdots,a_I}N(a,b|z,y)-\sum_{a_2,\cdots,a_I}p_{XY}(a,b)\right|\\
&\le \sum_{a_2,\cdots,a_I}\left|\frac{1}{n}N(a,b|z,y)-p_{XY}(a,b)\right|\\
&\le |R_2|\times\cdots\times |R_I|\times\frac{\epsilon}{|G|\cdot |\mathcal{Y}|}\\
&=\frac{\epsilon}{|R_1|\cdot |\mathcal{Y}|}
\end{align*}
We conclude that $z_1\in A_\epsilon^n(X_i|Y)$. Since $z\in \left[\bigoplus_{i=1}^I p_i^{\theta_i} R_i\right]^n$, we have $z_1\in \left[p_1^{\theta_1}R_1\right]^n$. Therefore, $z_1\in S_1$. This is true for all $i=1,\cdots,I$. Conclude that $S\subseteq S_1\times\cdots\times S_I$. Therefore,
\begin{align*}
|S|\le |S_1|\times\cdots\times |S_I|
\end{align*}

By Lemma 5 of \cite{dinesh_dsc}, we have:
\begin{align*}
|S_i|\le 2^{n\left[H(X_i|Y)-H([X_i]_{\theta_i})+O(\epsilon)\right]}
\end{align*}

Conclude that
\begin{align*}
\left|\left(x+\left[\bigoplus_{i=1}^I p_i^{\theta_i} R_i\right]^n\right)\cap A_\epsilon^n(y)\right|\le\prod_{i=1}^I2^{n\left[H(X_i|Y)-H([X_i]_{\theta_i})+O(\epsilon))\right]}
\end{align*}

\subsection{proof of lemma \ref{upperbound}}
The expected value of the average probability of word error is given by:
\begin{align*}
&\nonumber\mathds{E}\left\{P_{avg}(err)\right\}=\sum_{m=1}^{M}\frac{1}{M}\sum_{x\in
G^n}P\left(e(m)=x\right) \sum_{\substack{\tilde{m}=1\\\tilde{m}\ne m}}^{M}\\
&\sum_{y\in A_\epsilon^n(Y|x)} \sum_{\tilde{x}\in A_\epsilon^n(X|y)}P\left(e(\tilde{m})=\tilde{x}, Y^n=y|e(m)=x\right)+O(\epsilon)\\
&\nonumber =\sum_{m=1}^{M}\frac{1}{M}\sum_{x\in G^n}
P(e(m)=x)\sum_{\substack{\tilde{m}=1\\\tilde{m}\ne m}}^{M}\sum_{y\in A_\epsilon^n(Y|x)} \sum_{\tilde{x}\in A_\epsilon^n(X|y)}\\
&P(e(\tilde{m})=\tilde{x}|e(m)=x)W^n(y|x)+O(\epsilon)
\end{align*}
Using Lemma \ref{problemma} and Lemma \ref{cosettypicallemma} we get:
\begin{align*}
&\nonumber\mathds{E}\left\{P_{avg}(err)\right\}\le\\
&\nonumber \sum_{m=1}^{M}\frac{1}{M}\sum_{x\in G^n}
P(e(m)=x)\sum_{\theta}\sum_{\tilde{m}\in T_\theta(m)}\sum_{y\in A_\epsilon^n(Y|x)}\\
&\sum_{\tilde{x}\in \left(x+\left[\bigoplus_{i=1}^I p_i^{\theta_i} R_i\right]^n\right)\cap A_\epsilon^n(y)}\prod_{i=1}^I\frac{1}{p_i^{n(r_i-\theta_i)}}W^n(y|x)+O(\epsilon)\\
&\le \sum_{m=1}^{M}\frac{1}{M}\sum_{x\in G^n}
P(e(m)=x)\sum_{\theta}\sum_{\tilde{m}\in T_\theta(m)}\sum_{y\in A_\epsilon^n(Y|x)}\\
&\prod_{i=1}^I2^{n\left[H(X_i|Y)-H([X_i]_{\theta_i})+O(\epsilon)\right]}\prod_{i=1}^I\frac{1}{p_i^{n(r_i-\theta_i)}}W^n(y|x)+O(\epsilon)
\end{align*}

\begin{align*}
&= \sum_{m=1}^{M}\frac{1}{M}\sum_{x\in G^n}
P(e(m)=x)\sum_{\theta}\sum_{\tilde{m}\in T_\theta(m)}\\
&\prod_{i=1}^I 2^{n\left[H(X_i|Y)-H([X_i]_{\theta_i})+O(\epsilon)\right]}\prod_{i=1}^I\frac{1}{p_i^{n(r_i-\theta_i)}}\sum_{y\in A_\epsilon^n(Y|x)}W^n(y|x)+O(\epsilon)\\
& \le \sum_{m=1}^{M}\frac{1}{M}\sum_{x\in G^n}
P(e(m)=x)\sum_{\theta}\sum_{\tilde{m}\in T_\theta(m)}\\
&\prod_{i=1}^I2^{n\left[H(X_i|Y)-H([X_i]_{\theta_i})+O(\epsilon)\right]}\frac{1}{p_i^{n(r_i-\theta_i)}}+O(\epsilon)\\
&= \sum_{\theta}\sum_{m=1}^{M}\frac{1}{M}\sum_{x\in G^n}P(e(m)=x)\prod_{i=1}^I\left[(p_i^{r_i-\theta_i})^{w_ik}\right]\\
&\prod_{i=1}^I \left[2^{n\left[H(X_i|Y)-H([X_i]_{\theta_i})+O(\epsilon)\right]}\frac{1}{p_i^{n(r_i-\theta_i)}}\right]+O(\epsilon)\\
&\le \sum_{\theta}\prod_{i=1}^I\left[(p_i^{r_i-\theta_i})^{w_ik}2^ {n\left[H(X_i|Y)-H([X_i]_{\theta_i})+O(\epsilon)\right]}\frac{1}{p_i^{n(r_i-\theta_i)}}\right]+O(\epsilon)\\
\end{align*}
Since $\epsilon$ is arbitrary let $\epsilon\rightarrow 0$ and therefore $O(\epsilon)\rightarrow 0$ to get:
\begin{align*}
\mathds{E}\left\{P_{avg}(err)\right\}\le \sum_\theta \exp_2\left\{-n\sum_{i=1}^I \left[(r_i-\theta_i)\log p_i\right.\right.&\\
\left.\left.-\frac{w_i k}{n}(r_i-\theta_i)\log p_i-H(X_i|Y)+H([X_i]_{\theta_i}|Y)\right]\right\}&\\
\end{align*}
\end{document}